\newtheorem{Prop}{Proposition}[section]
\newtheorem*{Prop321}{Proposition III.1.1}
\newtheorem*{Prop322}{Proposition III.1.2}
\newtheorem*{Prop33}{Proposition III.2}
\newcommand{\be}{\begin{equation}}
\newcommand{\ee}{\end{equation}}
\newcommand{\lb}{\label}
\newcommand{\ol}{\overline}
\newcommand{\ba}{{\bf a}}
\newcommand{\bp}{{\bf p}}
\newcommand{\bu}{{\bf u}}
\newcommand{\bv}{{\bf v}}
\newcommand{\bw}{{\bf w}}
\newcommand{\bx}{{\bf x}}
\newcommand{\bz}{{\bf z}}
\newcommand{\bA}{{\bf A}}
\newcommand{\bB}{{\bf B}}
\newcommand{\bE}{{\bf E}}
\newcommand{\bR}{{\bf R}}
\newcommand{\bU}{{\bf U}}
\newcommand{\bW}{{\bf W}}
\newcommand{\bJ}{{\bf J}}
\newcommand{\bS}{{\bf S}}
\newcommand{\wt}{\widetilde}
\newcommand{\bomega}{{\mbox{\boldmath $\omega$}}}
\newcommand{\bOmega}{{\mbox{\boldmath $\Omega$}}}
\newcommand{\grad}{{\mbox{\boldmath $\nabla$}}}
\newcommand{\bdot}{{\mbox{\boldmath $\cdot$}}}
\newcommand{\bcirc}{{\mbox{{\tiny \boldmath $\circ$}}}}
\newcommand{\btimes}{{\mbox{\boldmath $\times$}}}
\newcommand{\bzed}{{\mbox{\boldmath $0$}}}
\title{Stochastic Line-Motion and Stochastic Conservation Laws
for Non-Ideal Hydromagnetic Models.  I. Incompressible Fluids and
Isotropic Transport Coefficients}
\author{Gregory L. Eyink\\
{\it Department of Applied Mathematics \& Statistics}\\
{\it The Johns Hopkins University}\\
{\it Baltimore, Maryland, USA}
}
\date{}
\begin{document}

\maketitle

\begin{abstract}
We prove that smooth solutions of non-ideal (viscous and resistive)
incompressible
magnetohydrodynamic equations  satisfy a stochastic law of flux conservation.
This property involves an ensemble of surfaces obtained from a given, fixed
surface by advecting it backward in time under the plasma velocity perturbed
with a random white-noise. It is shown that the magnetic flux through the
fixed surface is equal to the average of the magnetic fluxes through
the ensemble of surfaces at earlier times. This result is an analogue
of the well-known Alfv\'{e}n theorem of ideal MHD and is valid for any value
of the magnetic Prandtl number. A second stochastic conservation law is shown
to hold
at unit Prandtl number, a random version of the generalized Kelvin theorem
derived
by Bekenstein-Oron for ideal MHD. These stochastic conservation laws are not
only
shown to be consequences of the non-ideal MHD equations, but are proved in fact
to be equivalent to those equations.  We derive similar results for two more
refined
hydromagnetic models, Hall magnetohydrodynamics and the two-fluid plasma model,
still assuming incompressible velocities and isotropic transport coefficients.
Finally,
we use these results to discuss briefly the infinite-Reynolds-number limit of
hydromagnetic turbulence and to support the conjecture that flux-conservation
remains
stochastic in that limit.
 \end{abstract}

 \newpage

\section{Introduction}\lb{Intro}

If a plasma is sufficiently collisional, then it can be well-described by
fluid-mechanical
equations. There is a hierarchy of such hydromagnetic models, ranging from
standard
magnetohydrodynamics (MHD), to more refined models such as Hall MHD and the
two-fluid model, with separate equations for electron and ion fluids
\cite{Kulsrud05,Bellan06}.
In all of these fluid models the magnetic field lines (or magneto-vortex lines)
at the limit
of infinite conductivity  are ``frozen-in'' to the plasma, as first observed by
Alfv\'{e}n
\cite{Alfven42}. The properties of magnetic fields for MHD-type models are
closely
analogous to the properties of vorticity fields for the Navier-Stokes equation
of neutral
fluids \cite{Newcomb58,Stern66}.  Thus, the Helmholtz-Kelvin theorem on
conservation of circulations \cite{Helmholtz1858,Thomson1869} has an MHD
analogue in the Alfv\'{e}n theorem on flux conservation \cite{Alfven42}, and
the
Cauchy formula for vorticity \cite{Cauchy1815} has an analogue in the Lundquist
formula for magnetic field \cite{Lundquist51}. These Lagrangian properties of
magnetic fields are central to many physical processes in plasmas, such as
magnetic
dynamo and magnetic reconnection. It has been claimed, with some justification,
that  ``The most important property of an ideal plasma is flux freezing''
\cite{Kulsrud05}
(section 3.2).

The fundamental Lagrangian laws of conservation and line-motion hold exactly
only for smooth solutions of ideal fluid equations, with zero viscosities
and resistivities.  Recently, however, it has been shown by Constantin and Iyer
\cite{ConstantinIyer08,Iyer06a,Iyer06b} that the analogous laws of vorticity
under
ideal Euler evolution remain for the viscous Navier-Stokes solution as {\it
stochastic}
laws.  To formulate these results, the equations for Lagrangian fluid particles
advected
by the Navier-Stokes velocity field must be perturbed by a Gaussian
white-noise,
with amplitude depending upon the viscosity. A random ensemble of fluid motions
results,
depending upon the realization of the white-noise process. To calculate the
fluid circulation
on a given closed loop in the fluid, the loop is evolved backward in time to
obtain
a random ensemble of loops. The circulation on the given loop is the {\it
average}
over the circulations of the ensemble of loops at the earlier time. Not only do
Navier-Stokes solutions enjoy this remarkable ``stochastic Kelvin theorem,"
or conservation of circulations in the mean,  but Constantin and Iyer
\cite{ConstantinIyer08,Iyer06a,Iyer06b} have shown that this property
also uniquely characterizes the velocity fields which satisfy the Navier-Stokes
equation. Furthermore, vortex-lines at any chosen initial time are
``frozen-in'' to
the stochastic fluid flows and thus become themselves stochastic. The resultant
(deterministic) vorticity at any point at a later time is the average over the
random
ensemble of vorticity vectors that are advected to that point, stretched and
tilted,
by the stochastic flows. These results provide an intuitive way to understand
vortex
dynamics, both stretching and reconnection, in viscous Navier-Stokes fluids.

In this paper, we demonstrate similar stochastic conservation laws and
``frozen-in'' properties
for non-ideal (resistive and viscous) plasma fluid models. The proofs of
Constantin and
Iyer \cite{ConstantinIyer08,Iyer06a,Iyer06b} exploit the Weber formulation
\cite{Weber1868}
of the incompressible Euler equations, and we shall employ here the similar
Weber formulations
of incompressible hydromagnetic models developed in the work of Ruban and
Kuznetsov
\cite{Ruban99,KuznetsovRuban00}. Such Weber formulas are implied by the
Hamiltonian
structure of the ideal fluid models and thus have considerable generality. In
the case of
standard MHD, the Weber formula of \cite{KuznetsovRuban00} is mathematically
equivalent
to the generalized Kelvin theorem of Bekenstein and Oron
\cite{BekensteinOron00}, which
provides a second Lagrangian conservation law in addition to the Alfv\'{e}n
theorem
on flux conservation. We shall prove that non-ideal MHD at unit magnetic
Prandtl number
enjoys analogues of both of these results as stochastic conservation laws. For
general Prandtl
number, we shall show that MHD retains at least the stochastic Alfv\'{e}n
theorem, or
flux-conservation in the average sense. This result may be formulated
equivalently as a
stochastic Lundquist formula, according to which the magnetic field vectors are
``frozen-in''
to the stochastic flows and then ensemble-averaged to yield the resultant
magnetic field.
Similar results shall be established also for two more refined hydromagnetic
models, given
by the Hall MHD equations and the two-fluid model of electrons and ions.

There are several further directions in which these methods and ideas may be
developed.
It has been shown that the Constantin-Iyer formulation of the incompressible
Navier-Stokes equation in fact corresponds to a variational principle, a
stochastic version
of the Hamilton-Maupertuis principle of least-action \cite{Eyink09}. In that
work the stochastic
Kelvin theorem was shown to arise from a symmetry of the stochastic action
under the
infinite-dimensional particle-relabelling group.  The ideal fluid models of
hydromagnetics
are also Hamiltonian in form, as discussed in
\cite{Ruban99,KuznetsovRuban00,BekensteinOron00}
and references therein.  The results of the present paper can be derived from
stochastic
action principles and, in particular, the stochastic version of the
Bekenstein-Oron generalized
Kelvin theorem for MHD can be shown to arise from invariance of the stochastic
MHD action
under particle-relabelling. These results shall be given elsewhere. We shall
also extend the
main results of this paper in following work  \cite{EyinkNeto09} to
hydromagnetic models of compressible
fluids with anisotropic transport coefficients, i.e. with differing values of
viscosity and resistivity in directions
longitudinal and transverse to the magnetic field. Such refinements have
importance in applications,
but we confine ourselves here for simplicity to incompressible fluids and
isotropic coefficients.
We shall also make only a few brief remarks, in the conclusion section, about
turbulent
hydromagnetics at very high (kinetic and magnetic) Reynolds numbers
\cite{Eyink06,Eyink07,Eyink08,Constantin08}.

\section{Standard Magnetohydrodynamics}\lb{MHD}

We consider in this section the standard incompressible MHD equations
\cite{Kulsrud05,Bellan06}
for the velocity field $\bu$ and the magnetic field $\bB$ in three
space-dimensions,
written (in cgs units) as:
\be \partial_{t} \bu + (\bu\bdot\grad)\bu = -\grad p + \frac{1}{\rho
c}\bJ\btimes\bB
+ \nu\bigtriangleup\bu
\lb{u-eq} \ee
\be \partial_{t} \bB = \grad\times(\bu\btimes\bB) + \lambda\bigtriangleup\bB
\lb{B-eq} \ee
%\be \rho \partial_{t} \bu + \rho (\bu\cdot\grad)\bu = -\grad P +
%%\frac{1}{c}\bJ\btimes\bB + \mu\nabla^{2}\bu
%\lb{u-eq} \ee
%\be \partial_{t} \bB = \grad\times(\bu\times\bB) + \eta\nabla^{2}\bB
%\lb{B-eq} \ee
\be \grad\cdot\bB=\grad\cdot\bu=0
\lb{div-free} \ee
Here the mass density $\rho,$ the kinematic viscosity $\nu=\mu/\rho$ and the
magnetic diffusivity
$\lambda=\eta c/4\pi$ are all
assumed to be space-time constants. The electric current $\bJ$ is given by the
nonrelativistic
approximation to Ampere's law as $\bJ= \frac{c}{4\pi}(\grad\btimes\bB)$. Since
the speed of light
$c$ cancels in eq.(\ref{u-eq}), it is simplifying to set $c=1$ and it is also
convenient to assume that
density $\rho=1.$ We do both throughout this section.

As did Constantin-Iyer \cite{ConstantinIyer08,Iyer06a,Iyer06b}, we take the
flow domain $\Omega$
to be either the 3-torus ${\mathbb T}^3$ or else 3-dimensional Euclidean space
${\mathbb R}^3.$
In the latter case, we require that $\bu,\bB$ decay sufficiently at infinity.
The existence of weak
solutions to the above MHD system (\ref{u-eq})-(\ref{div-free}) and the
existence, uniqueness,
and regularity of local-in-time strong solutions are established, for example,
in
\cite{DuvautLions72,SermangeTemam83}. We shall consider here only sufficiently
smooth initial data $\bu_0,\bB_0\in C^{k,\alpha}(\Omega)$ with $k\geq 3,$ where
$C^{k,\alpha}(\Omega)$ for $k\geq 0$ and $\alpha\in (0,1)$ is the Banach space
of functions
that are $k$-times differentiable with $k$th partial-derivatives H\"{o}lder
continuous of
exponent $\alpha.$ We shall use the fact that there exists for such initial
data a unique
solution of (\ref{u-eq})-(\ref{div-free}) with $\bu,\bB\in
C([t_0,t_f],C^{k,\alpha}(\Omega),$
$k\geq 2,$ for some $T=t_f-t_0>0.$ It should be possible as in
\cite{Iyer06a,Iyer06b}
to give a self-contained local-existence result, based upon the fixed-point
characterization
of MHD solutions in the theorems below. However, here we find it simpler to
give a more
direct argument that existing smooth solutions possess the stated stochastic
conservation
laws.

% It is convenient to
%use the Alfv\'{e}nic variable (with units of velocity) defined by
%%$\bB_A=\bB/\sqrt{4\pi \rho}$
%and then to define consistently $\bJ_A=\grad\btimes\bB_A.$ We do so hereafter,
%%omitting
%the subscript $A$ as understood,  yielding the MHD momentum equation in the
%%final form:
%\be \partial_t\bu+(\bu\bdot\grad)\bu= -\grad p+\bJ\btimes
%%\bB+\nu\bigtriangleup\bu.
%\lb{u2-eq} \ee
%This allows the
%Lorentz force $\frac{1}{c}\bJ\btimes\bB$ to be rewritten as $
%%\frac{1}{4\pi}(\grad\btimes\bB)\btimes\bB
%= \frac{1}{4\pi}\grad\cdot(\bB\bB) -\grad(\frac{B^{2}}{8\pi}),$ so that the
%%magnetic pressure
%$B^2/(8\pi)$ can be combined in eq.(\ref{u-eq}) with the plasma pressure $P$
%%to give a total pressure
%$P_*=P+B^2/(8\pi).$ The eq.(\ref{B-eq}) for the magnetic field is the
%%consequence of Faraday's
%law $\partial\bB/\partial t=-c\grad\btimes\bE$ and Ohm's law
%\be \bE +\frac{1}{c}\bu\btimes\bB = \bJ/\sigma, \lb{Ohm} \ee
%with the conductivity $\sigma=c^2/(4\pi\eta).$

\subsection{Unit Magnetic Prandtl Number}

Our first result shall be for the case of unit magnetic Prandtl number, when
$\nu=\lambda.$
In this case, we can prove that there are {\it two} stochastic Lagrangian
conservation laws
for solutions of (\ref{u-eq})-(\ref{div-free}), one corresponding to the
Alfv\'{e}n theorem
\cite{Alfven42} and another corresponding to a generalized Kelvin theorem
\cite{KuznetsovRuban00,BekensteinOron00}. As in
\cite{ConstantinIyer08,Iyer06a,Iyer06b},
we shall show that the stochastic conservation laws furthermore uniquely
characterize
the MHD solutions. A precise statement of our result is as follows:

\begin{Prop}\lb{Prop1}
Divergence-free fields  $\bu,\bB\in C([t_0,t_f],C^{k,\alpha}(\Omega))$  satisfy
the non-ideal,
incompressible MHD equations (\ref{u-eq})-(\ref{B-eq}) with initial data
$\bu_0,\bB_0\in
C^{k,\alpha}(\Omega)$ for $k\geq 3$ and $\nu=\lambda,$  iff for all closed,
rectifiable loops $C$ and for all $t\in [t_0,t_f]$ (with $\bA={{\rm
curl}}^{-1}\bB)$
\be \oint_C \bA(\bx,t)\bdot d\bx=
     {\mathbb E} \left[\oint_{\wt{\ba}(C,t)} \bA_0(\ba)\bdot d\ba \right],
\lb{Alfven-A-thm} \ee
\be \oint_C \bu(\bx,t)\bdot d\bx=
     {\mathbb E} \left[\oint_{\wt{\ba}(C,t)} [\bu_0(\ba)+\bB_0(\ba)\btimes
\wt{\bR}_*(\ba,t)]\bdot d\ba \right].
 \lb{gen-Kelvin} \ee
Here $\wt{\ba}(\bx,t)$ are ``back-to-label maps'' for stochastic forward flows
$\wt{\bx}(\ba,t)$ solving
\be d\wt{\bx}(\ba,t)=\bu(\wt{\bx}(\ba,t),t)dt+\sqrt{2\nu}\,d\bW(t),\,\,t>t_0,
\,\,\,\,\,\,\,\,\,\,\,\,\,\,\, \wt{\bx}(\ba,t_0)=\ba, \lb{forward} \ee
$\wt{\bR}_*(\ba,t)$ is the {\it Lagrangian-history charge density} (charge per
unit area) satisfying
\be \partial_t \wt{\bR}_*(\ba,t) =
-\bJ(\wt{\bx}(\ba,t),t)(\grad_a\wt{\bx}(\ba,t))^{-1}, \,\,t>t_0,
\,\,\,\,\,\,\,\,\,\,\,\,\,\,\, \wt{\bR}_*(\ba,t_0)=\bzed, \lb{Rstar-eq} \ee
and ${\mathbb E}$ in (\ref{Alfven-A-thm}),(\ref{gen-Kelvin}) denotes average
over realizations
of the Brownian motion $\bW(t)$ in the SDE (\ref{forward}).
\end{Prop}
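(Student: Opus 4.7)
The plan is to adapt the Constantin--Iyer argument for Navier--Stokes \cite{ConstantinIyer08,Iyer06a,Iyer06b} to the Kuznetsov--Ruban / Bekenstein--Oron Weber formulation of incompressible MHD \cite{KuznetsovRuban00,BekensteinOron00}. The two identities in the proposition correspond to the two conserved $1$-forms of ideal MHD at unit Prandtl number --- the vector potential $\bA$ and the Bekenstein--Oron $1$-form $\bu+\bB\btimes\bR$ --- now in stochastic guise, and I would derive both from It\^o's formula applied to Lagrangian pullbacks along the stochastic forward flow (\ref{forward}).

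For the forward direction (MHD implies the stochastic identities) I first establish the Alfv\'en identity (\ref{Alfven-A-thm}) by proving a stochastic Weber formula for the vector potential,
\be \bA(\bx,t) = {\mathbb E}\bigl[(\grad_x \wt{\ba})^T \bA_0(\wt{\ba}(\bx,t))\bigr] + \grad\chi. \ee
Uncurling (\ref{B-eq}) in an appropriate gauge produces a ``Weber'' stretching-plus-dissipation PDE $(\partial_t+\bu\bdot\grad)\bA+(\grad\bu)^T\bA = \lambda\bigtriangleup\bA+\grad\psi$; this rewriting uses the vector identity $\bu\btimes(\grad\btimes\bA) = \grad(\bu\bdot\bA) - (\bu\bdot\grad)\bA - (\grad\bu)^T\bA$ combined with the $\bA\btimes(\grad\btimes\bu)$ identity to assemble the transpose-gradient term. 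An It\^o / Kunita calculation then shows that the Lagrangian pullback of $\bA$ along (\ref{forward}) evolves by a gradient plus a martingale, the crucial cancellation between the It\^o correction from the quadratic variation and the dissipation $\lambda\bigtriangleup\bA$ occurring precisely at $\nu=\lambda$. Integrating around the random loop $\wt{\ba}(C,t)$ and taking expectation kills both contributions, yielding (\ref{Alfven-A-thm}) after change of variables.

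For the generalized Kelvin identity (\ref{gen-Kelvin}) I repeat the program with $\bu$, writing (\ref{u-eq}) in the analogous Weber form
\be (\partial_t+\bu\bdot\grad)\bu+(\grad\bu)^T\bu = \nu\bigtriangleup\bu + \bJ\btimes\bB - \grad\bigl(p+\frac{|\bu|^2}{2}\bigr). \ee
The only new ingredient is the Lorentz source $\bJ\btimes\bB$, which after pullback produces a residual drift. Using the stochastic frozen-in relation for $\bB$ (the vector form of the Alfv\'en step) together with the transformation law for the cross product under a linear map --- which at incompressibility $\det(\grad_a\wt{\bx})=1$ exchanges $\bB$ with $\bB_0$ at the expense of trading $\bJ$ for $\bJ\,(\grad_a\wt{\bx})^{-1}$ --- I identify this residual drift with $\bB_0(\ba)\btimes\partial_t\wt{\bR}_*(\ba,t)$ via (\ref{Rstar-eq}). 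Integrating in $t$ and using $\wt{\bR}_*(\ba,t_0)=\bzed$ adds the term $\bB_0\btimes\wt{\bR}_*$ to $\bu_0$ inside the pullback, producing (\ref{gen-Kelvin}) after the same loop-and-expectation argument as before.

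The reverse direction (stochastic identities imply MHD) I would handle by differentiating (\ref{Alfven-A-thm})--(\ref{gen-Kelvin}) in $t$ at $t=t_0$ --- where Jacobians reduce to the identity and It\^o corrections degenerate --- to recover pointwise PDEs for $\bu$ and $\bA$ in Weber form via Stokes' theorem, and then propagating forward in $t$ by uniqueness of smooth solutions (alternatively, by the fixed-point packaging of \cite{Iyer06a,Iyer06b}). The technical heart of the proof, and the step I expect to be the main obstacle, is the Lorentz manipulation: producing the $(\grad_a\wt{\bx})^{-1}$ factor on cue from the frozen-in identity and the antisymmetry of $\btimes$ is what forces $\nu=\lambda$ to enter twice --- once in the $\bA$-calculation that validates the frozen-in identity, and once in matching the It\^o correction to $\nu\bigtriangleup\bu$ in the momentum equation --- and explains why the Bekenstein--Oron identity degenerates for general Prandtl number.
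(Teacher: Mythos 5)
Your overall architecture --- Weber forms for the two conserved $1$-forms, It\^{o}/Kunita calculus on Lagrangian pullbacks, and the identification of the Lorentz drift with $\bB_0\btimes\partial_t\wt{\bR}_*$ via the frozen-in field and the volume-preserving transformation of the cross product --- is the same as the paper's, and you have correctly isolated the key algebraic step (the paper's eq.~(\ref{Rdot-eq})) and the role of $\nu=\lambda$. However, the mechanism you propose for the direction ``MHD $\Rightarrow$ stochastic identities'' fails as stated. If the pullback $\grad_x\wt{\ba}\,\bA_0(\wt{\ba}(\bx,t))$ really evolved by ``a gradient plus a martingale,'' then ${\mathbb E}\left[\oint_{\wt{\ba}(C,t)}\bA_0\bdot d\ba\right]$ would be constant in $t$, equal to $\oint_C\bA_0\bdot d\bx$; you would have proved the ideal Alfv\'{e}n theorem rather than (\ref{Alfven-A-thm}), whose left side $\oint_C\bA(\bx,t)\bdot d\bx$ changes in time. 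The drift of that pullback is in fact the full operator $-(\bu\bdot\grad)(\cdot)-(\grad\bu)^T(\cdot)+\nu\bigtriangleup(\cdot)$, whose integral around the fixed loop $C$ does not vanish; moreover, since $\bA_0$ is a fixed function of the label, there is no $\lambda\bigtriangleup\bA$ term available to cancel the It\^{o} correction. (A genuine martingale argument exists, but it pulls back the time-$s$ field $\bA(\cdot,s)$ along the two-parameter flow from $t$ back to $s$, where the backward It\^{o} correction carries the opposite sign.) What your computation actually yields, after taking expectations, is that ${\mathbb E}[\grad_x\wt{\ba}\,(\bA_0\circ\wt{\ba})]$ solves the \emph{same linear} advection--stretching--diffusion PDE as $\bA$ with the same initial data --- likewise (\ref{ubar-eq})--(\ref{Bbar-eq}) for $\ol{\bu},\ol{\bB}$ --- and to conclude equality you must prove uniqueness for that linear parabolic system. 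This is exactly the step the paper supplies with the Gronwall-type energy estimate (\ref{z-ineq}), and it is missing from your proposal.

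The reverse direction also has a gap: differentiating (\ref{Alfven-A-thm})--(\ref{gen-Kelvin}) at $t=t_0$ yields the MHD equations only at the single time $t_0$, because the hypotheses are anchored at base time $t_0$ (the flow starts there and $\wt{\bR}_*(t_0)=\bzed$) and are not assumed to re-anchor at later times, so ``propagating forward by uniqueness'' has nothing to propagate. The paper instead computes the It\^{o} differentials (\ref{w-eq})--(\ref{Btilde-eq}) at every $t\in[t_0,t_f]$ and takes expectations, using the assumed fixed-point relations $\bu={\mathbb E}{\mathbb P}\wt{\bv}$ and $\bB={\mathbb E}\wt{\bB}$; your parenthetical appeal to the fixed-point packaging of \cite{Iyer06a,Iyer06b} is the correct repair. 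Note also that in the SPDE for the Weber integrand the Lorentz force appears as $\bJ\btimes\wt{\bB}$ with the random frozen-in field $\wt{\bB}$, not as $\bJ\btimes\bB$; converting one into the other requires the expectation together with the stochastic Lundquist formula, which is why (\ref{Alfven-A-thm}) and (\ref{gen-Kelvin}) must be treated as a coupled fixed-point problem rather than sequentially.
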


\noindent Before we present the proof of the above proposition, let us make a
few explanatory remarks.
The first result (\ref{Alfven-A-thm}) may be re-expressed in terms of a flux
integral through a smooth
bounding surface $S$ of the closed loop $C$, with $C=\partial S.$ It takes the
form of a
{\it stochastic Alfv\'{e}n theorem}, expressing conservation on average of
magnetic flux:
\be \int_S \bB(\bx,t)\bdot d\bS(\bx)=
     {\mathbb E} \left[\int_{\wt{\ba}(S,t)} \bB_0(\ba)\bdot d\bS(\ba) \right],
\,\,\,\,\,\, \lb{Alfven-B-thm} \ee
This result is, in turn, equivalent to a {\it stochastic Lundquist formula} for
the local magnetic field,
\be \bB(\bx,t)=
{\mathbb E} \left[\left.\bB_0(\ba)\bdot \grad_a\wt{\bx}(\ba,t)
           \right|_{\wt{\ba}(\bx,t)} \right]. \lb{Lundquist} \ee
%Useful to us in the following will be also the following ``Weber-type
%%formula'' equivalent to
%(\ref{Alfven-A-thm}),
%\be \bA(\bx,t) =
%{\mathbb E }{\mathbb P}\left[
%%\grad_x\wt{\ba}(\bx,t)\left(\bA_0(\wt{\ba}(\bx,t))\right)\right],
%\lb{gen-Weber-A} \ee
%where ${\mathbb P}$ denotes the Leray-Hodge projection onto divergence-free
%%vector fields
%\cite{ChorinMarsden00}.
The second law (\ref{gen-Kelvin}) is a {\it stochastic Kelvin theorem},
expressing conservation
on average of generalized circulation, analogous to the deterministic result of
\cite{BekensteinOron00}.
It is equivalent to a {\it stochastic Weber formula}, corresponding to the
deterministic formula of
\cite{Ruban99,KuznetsovRuban00},
\be \bu(\bx,t) =
{\mathbb E }{\mathbb P}\left[ \grad_x\wt{\ba}(\bx,t)\left.
  \left(\bu_0(\ba)+\bB_0(\ba)\btimes
\wt{\bR}_*(\ba,t)\right)\right|_{\wt{\ba}(\bx,t)}\right].
  \lb{gen-Weber-L} \ee
where ${\mathbb P}$ denotes the Leray-Hodge projection onto divergence-free
vector fields
\cite{ChorinMarsden00}.
To explain the physical meaning of this formula, it is useful to transform the
final term
to Eulerian form:
\be \bu(\bx,t) =
{\mathbb E }{\mathbb P}\left[ \grad_x\wt{\ba}(\bx,t)\bu_0(\wt{\ba}(\bx,t))
           +\wt{\bB}(\bx,t)\btimes \wt{\bR}(\bx,t)\right], \lb{gen-Weber-E} \ee
where $\wt{\bB}(\bx,t)\equiv \left.\bB_0(\ba)\bdot\grad_a\wt{\bx}(\ba,t)
\right|_{\wt{\ba}(\bx,t)}$ and $\wt{\bR}(\bx,t)\equiv
\left.\wt{\bR}_*(\ba,t)\bdot
\grad_a\wt{\bx}(\ba,t)\right|_{\wt{\ba}(\bx,t)}.$ Then, employing the usual
rules of calculus,
$\wt{\bR}(\bx,t)$ satisfies the Stratonovich SPDE which follows from
(\ref{Rstar-eq}):
\be d\wt{\bR}(\bx,t)=\grad\btimes\left(\bU(\bx,\bcirc
\,dt)\btimes\wt{\bR}(\bx,t)\right)
     -\bJ(\bx,t)dt, \,\,t>t_0,\,\,\,\,\,\,\,\,\,\,\,\,\,\,\,
\wt{\bR}(\bx,t_0)=\bzed, \lb{R-eq} \ee
with $\bU(\bx,t)=\int_{t_0}^t dt'\,\bu(\bx,t')+\sqrt{2\nu}\bW(t)$ the (Ito and
Stratonovich)
infinitesimal generator of the stochastic Lagrangian flow. Consider the flux
integral
of $\wt{\bR}_*$ through any smooth surface:
\be \wt{Q}(S,t)\equiv \int_S \wt{\bR}_*(\ba,t)\bdot d\bS(\ba)
                         =\int_{\wt{\bx}(S,t)} \wt{\bR}(\bx,t)\bdot d\bS(\bx).
\lb{R-flux} \ee
Then, (\ref{R-eq}) is equivalent to the following equation, valid for all
smooth surfaces $S$:
\be d\wt{Q}(S,t)=-dt\cdot \int _{\wt{\bx}(S,t)} \bJ(\bx,t)\bdot d\bS(\bx),
\,\,t>t_0,
\,\,\,\,\,\,\,\,\,\,\,\,\,\,\,\wt{Q}(S,t_0)=0. \lb{Q-eq} \ee
The above equation implies that $-\wt{Q}(S,t)$ equals the electric charge which
flowed across the material surface $\wt{\bx}(S,t)$ between times $t_0$ and $t.$
This
explains the name ``Lagrangian-history charge density'' for the field
$\bR_*(\ba,t)$
used in the above proposition.  For any infinitesimal vector surface element
$d\bS(\ba)$
starting at point $\ba,$ $-\bR_*(\ba,t)\bdot d\bS(\ba)$ equals the charge
crossing
the advected surface from $t_0$ to $t.$

\begin{proof}[Proof of Proposition \ref{Prop1}:] We first remark that, given
the velocity
field $\bu\in C([t_0,t_f],C^{k,\alpha}(\Omega))$ for $k\geq 1,$ there exists a
stochastic
flow $\wt{\bx}(\ba,t)$ of $C^{k,\alpha}$-diffeomorphisms solving the SDE
(\ref{forward}),
so that the inverse map $\wt{\ba}(\bx,t)$ exists and belongs to
$C([t_0,t_f],C^{k,\alpha}
(\Omega)),$ at least for $T$ sufficiently small. This follows by the arguments
in \cite{Iyer06a,Iyer06b} or the general methods in the monograph
\cite{Kunita90},
Chapter 4. If we assume that $\bu,\bB\in C([t_0,t_f],C^{k,\alpha}(\Omega))$ for
$k\geq 3,$
then $\grad_a\wt{\bx},\bJ\in C([t_0,t_f],C^{k,\alpha}(\Omega))$ for $k\geq 2,$
so that
$\wt{\bR}_*\in C^1([t_0,t_f],C^{k,\alpha}(\Omega))$ for $k\geq 2.$ This is
sufficient regularity
to justify all of our calculations below. It is particularly important that
$\wt{\bR}_*$ is bounded
variation in time and has no martingale part.

We begin by showing the ``if'' direction. Therefore, assuming that
divergence-free fields
$\bu,\bB \in C([t_0,t_f],C^{k,\alpha}(\Omega))$ with $k\geq 3$ solve the
fixed-point problem
(FPP) specified by the eqs.(\ref{Alfven-A-thm})--(\ref{Rstar-eq}) we must prove
that they also
satisfy the incompressible MHD equations (\ref{u-eq})--(\ref{B-eq}). The
argument
closely follows the proof of Theorem 2.2 in Section 4 of
\cite{ConstantinIyer08}.
We make the successive definitions
\be \wt{\bw}(\bx,t) =
  \left[\bu_0(\ba)+\bB_0(\ba)\btimes \wt{\bR}_*(\ba,t)\right]_{\wt{\ba}(\bx,t)}
\lb{w-def} \ee
\be \wt{\bv}(\bx,t) = \grad_x\wt{\ba}(\bx,t)\wt{\bw}(\bx,t) \lb{v-def} \ee
\be \wt{\bu}(\bx,t) = {\mathbb P}\wt{\bv}(\bx,t) =
\wt{\bv}(\bx,t)-\grad_x\wt{\varphi}(\bx,t)
      \lb{u-def} \ee
We see that $\wt{\bw},\wt{\bv},\wt{\bu}\in C([t_0,t_f],C^{k,\alpha}(\Omega))$
for $k\geq 2$
and that the stochastic Weber formula (\ref{gen-Weber-L}) is restated as $\bu=
{\mathbb E}(\wt{\bu}).$  We now develop a stochastic evolution equation for
each of
these three variables.

Note first that the ``back-to-labels map'' $\wt{\ba}$ satisfies
\be d\wt{\ba}(\bx,t)
+\left[(\bu\bdot\grad_x)\wt{\ba}-\nu\bigtriangleup\wt{\ba}\right]dt
              +\sqrt{2\nu}(d\bW(t)\bdot\grad_x)\wt{\ba} =0. \lb{a-eq} \ee
This is proved in \cite{ConstantinIyer08}, Proposition 4.2. It can also be
derived as a special case
of the``first It\^{o} formula'' for a backward flow; see \cite{Kunita90},
Theorem 4.4.5.
Then, as a consequence of the generalized Ito rule,
\be d \wt{\bw}(\bx,t) =\left[
-(\bu\bdot\grad_x)\wt{\bw}+\nu\bigtriangleup\wt{\bw}\right]dt
       + \left[\bB_0(\ba)\btimes
\partial_t\wt{\bR}_*(\ba,t)\right]_{\wt{\ba}(\bx,t)}dt
       -\sqrt{2\nu}(d\bW(t)\bdot\grad_x)\wt{\bw} \lb{w-eq} \ee
For example, see \cite{ConstantinIyer08}, Corollary 4.3. The term in
(\ref{w-eq}) involving
$\partial_t\wt{\bR}_*$ can be evaluated using
\be \grad_x\wt{\ba}(\bx,t)\left[\bB_0(\ba)\btimes
\partial_t\wt{\bR}_*(\ba,t)\right]_{\wt{\ba}(\bx,t)}
         = \bJ(\bx,t)\btimes \wt{\bB}(\bx,t), \lb{Rdot-eq} \ee
which follows from (\ref{Rstar-eq}), and from the definition
\be \wt{\bB}(\bx,t)=\left.\bB_0(\ba)\bdot
\grad_a\wt{\bx}(\ba,t)\right|_{\wt{\ba}(\bx,t)}.
 \lb{B-def} \ee
We next calculate the differential of $\wt{\bv}$ using the Ito product rule,
$$  d \wt{\bv}(\bx,t)  = \grad_x\wt{\ba}(\bx,t)d\wt{\bw}(\bx,t)
      + d(\grad_x\wt{\ba})\wt{\bw}(\bx,t)+
d\langle\grad_x\wt{\ba},\wt{\bw}\rangle, $$
which, together with (\ref{a-eq}),(\ref{w-eq}),(\ref{Rdot-eq}), gives
\be   d \wt{\bv}(\bx,t) =\left[ -(\bu\bdot\grad_x)\wt{\bv}
        -(\grad_x\bu)\wt{\bv}+\bJ(\bx,t)\btimes
\wt{\bB}(\bx,t)+\nu\bigtriangleup\wt{\bv}\right]dt
              -\sqrt{2\nu}(d\bW(t)\bdot\grad_x)\wt{\bv}. \lb{v-eq} \ee
The rest of the argument goes exactly as in Section 4 of
\cite{ConstantinIyer08}. As in
the proof of Theorem 2.2 of \cite{ConstantinIyer08}, the differential of
$\wt{\bu},$ as defined
in (\ref{u-def}), can be expressed as
\begin{eqnarray}
 d \wt{\bu}(\bx,t) & = & \left[ -(\bu\bdot\grad_x)\wt{\bu}
        -(\grad_x\bu)\wt{\bu}+\bJ(\bx,t)\btimes
\wt{\bB}(\bx,t)+\nu\bigtriangleup\wt{\bu}\right]dt
              -\sqrt{2\nu}(d\bW(t)\bdot\grad_x)\wt{\bu} \cr
       & & \,\,\,\,\,\,\,\,\,\,\,\,\,\,\, -\grad_x\left[d
\wt{\varphi}+\left((\bu\bdot\grad_x)\wt{\varphi}
-\nu\bigtriangleup\wt{\varphi}\right)dt
+\sqrt{2\nu}(d\bW(t)\bdot\grad_x)\wt{\varphi} \right]
\lb{utilde-eq} \end{eqnarray}
and, using again the generalized Ito rule and definition (\ref{B-def}),
\be  d \wt{\bB}(\bx,t) =\left[
-(\bu\bdot\grad_x)\wt{\bB}+(\wt{\bB}\bdot\grad_x)\bu
       +\nu\bigtriangleup\wt{\bB}\right]dt
-\sqrt{2\nu}(d\bW(t)\bdot\grad_x)\wt{\bB}
       \lb{Btilde-eq} \ee
just as in the proof of Proposition 2.7 of \cite{ConstantinIyer08}. Taking the
expectation
over the Brownian motion in eqs.(\ref{utilde-eq})-(\ref{Btilde-eq}) yields
eqs.(\ref{u-eq})-(\ref{B-eq}) with $\lambda=\nu$ and kinematic pressure
$p=\frac{1}{2}|\bu|^2+\dot{\varphi}+(\bu\bdot\grad_x)\varphi
-\nu\bigtriangleup\varphi.$

We finally show the  ``only if'' direction. Therefore, assuming that
divergence-free fields
$\bu,\bB \in C([t_0,t_f],C^{k,\alpha}(\Omega))$ with $k\geq 3$ solve the
incompressible
MHD equations (\ref{u-eq})--(\ref{B-eq}) we shall show that they also satisfy
the fixed-point
problem (FPP) specified by the eqs.(\ref{Alfven-A-thm})--(\ref{Rstar-eq}). Let
us define
\be \ol{\bu}(\bx,t) =
{\mathbb E }{\mathbb P}\left[ \grad_x\wt{\ba}(\bx,t)\left.
  \left(\bu_0(\ba)+\bB_0(\ba)\btimes \wt{\bR}_*(\ba,t)\right)
  \right|_{\wt{\ba}(\bx,t)}\right] \lb{ubar-def} \ee
and
\be \ol{\bB}(\bx,t)=
{\mathbb E} \left[\left.\bB_0(\ba)\bdot \grad_a\wt{\bx}(\ba,t)
           \right|_{\wt{\ba}(\bx,t)} \right], \lb{Bbar-def} \ee
where $\wt{\bx}(\ba,t)$ solves (\ref{forward}) and $\wt{\bR}_*(\ba,t)$ solves
(\ref{Rstar-eq}),
for the given $\bu,\bB.$ It then follows from our previous work that
$\ol{\bu},\ol{\bB}
\in C([t_0,t_f],C^{k,\alpha}(\Omega))$ with $k\geq 2,$ are divergence-free, and
solve
the linear equations
\be  \partial_t \ol{\bu} =  -(\bu\bdot\grad)\ol{\bu}
        -(\grad\bu)\ol{\bu}-\grad\ol{p}+\bJ\btimes \ol{\bB}
+\nu\bigtriangleup\ol{\bu},
        \lb{ubar-eq} \ee
\be \partial_t \ol{\bB} =  -(\bu\bdot\grad)\ol{\bB}
        + \ol{\bB}\bdot\grad\bu+\nu\bigtriangleup\ol{\bB}, \lb{Bbar-eq} \ee
with initial conditions $\ol{\bu}(t_0)=\bu_0,\ol{\bB}(t_0)=\bB_0.$  At least
one solution
is the pair $(\bu,\bB)$ itself, so that, if solutions of the initial-value
problem are unique,
it must be the case that $(\ol{\bu},\ol{\bB})=(\bu,\bB).$ It thus suffices to
prove that the
linear system (\ref{ubar-eq})-(\ref{Bbar-eq}) has unique solutions for
specified initial data.

This may be shown by a standard argument based on an energy estimate (e.g. see
\cite{Nunez97}). For the pair $\bz(\bx)=(\bu(\bx),\bB(\bx))$ define norms
$$ \|\bz\|_2 = \left(\int_\Omega d^3x\,\left[|\bu(\bx)|^2+|\bB(\bx)|^2\right]
\right)^{1/2},\,\,\,\,\,\,\,
\|\bz\|_\infty = \sup_{\bx\in\Omega}\left[|\bu(\bx)|+|\bB(\bx)|\right]. $$
An easy calculation then shows for any solution
$\ol{\bz}(\bx,t)=(\ol{\bu}(\bx,t),
\ol{\bB}(\bx,t))$ of (\ref{ubar-eq}),(\ref{Bbar-eq}) that
\begin{eqnarray*}
\frac{d}{dt} \|\ol{\bz}(t)\|_2^2 & = & 2\int_\Omega d^3 x\, \left[\partial_j
u_i(\bx,t)\left(
     \ol{B}_i(\bx,t) \ol{B}_j(\bx,t)-\ol{u}_i(\bx,t)
\ol{u}_j(\bx,t)\right)\right. \cr
     && \,\,\,\,\,\,\,\,\,\,\,\,\,\,\,\,\,\,\,\,\,\,\left. +\partial_j
B_i(\bx,t)\left(\ol{u}_i(\bx,t) \ol{B}_j(\bx,t)
     -\ol{u}_j(\bx,t)
\ol{B}_i(\bx,t)\right)\right]-2\nu\|\grad\ol{\bz}(t)\|_2^2 \cr
     & \leq & 2\|\grad\bz(t)\|_\infty
\|\ol{\bz}(t)\|_2^2-2\nu\|\grad\ol{\bz}(t)\|_2^2.
\end{eqnarray*}
For some $\epsilon>0,$ choose $\gamma>\sup_{t\in
[t_0,t_f]}\|\grad\bz(t)\|_\infty + \epsilon.$
Then it follows that
$$ \frac{d}{dt} \left[e^{-2\gamma(t-t_0)}\|\ol{\bz}(t)\|_2^2\right]
      \leq -2 e^{-2\gamma(t-t_0)}\left[
\epsilon\|\ol{\bz}(t)\|_2^2+\nu\|\grad\ol{\bz}(t)\|_2^2\right]. $$
Integration yields the energy inequality
\be e^{-2\gamma(t_f-t_0)}\|\ol{\bz}(t_f)\|_2^2+
       2\int_{t_0}^{t_f} dt\,  e^{-2\gamma(t-t_0)}\left[
\epsilon\|\ol{\bz}(t)\|_2^2+\nu\|\grad\ol{\bz}(t)\|_2^2\right]
       \leq   \|\bz_0\|_2^2, \lb{z-ineq} \ee
which implies uniqueness of solutions of (\ref{ubar-eq}),(\ref{Bbar-eq}) as a
direct consequence.
\end{proof}

\subsection{General Magnetic Prandtl Number}

An examination of the proof in the previous subsection reveals an interesting
fact that
the equation (\ref{ubar-eq}) for $\ol{\bu}$ involves both $\ol{\bu}$ and
$\ol{\bB},$ but
the equation (\ref{Bbar-eq}) for $\ol{\bB}$ involves only $\ol{\bB}$ itself.
This implies
that the stochastic representation previously employed for both $\bu$ and $\bB$
can be exploited for $\bB$ alone and, furthermore, at any magnetic Prandtl
number.
A precise statement of the result is as follows:

\begin{Prop}\lb{Prop2}
Divergence-free fields  $\bu,\bB\in C([t_0,t_f],C^{k,\alpha}(\Omega))$  satisfy
the
non-ideal, incompressible MHD equations (\ref{u-eq})-(\ref{B-eq})
%(with $\bJ=\grad\btimes\bB$)
%$$ \partial_t\bu+(\bu\bdot\grad)\bu= \bJ\btimes \bB-\grad
%%p+\nu\bigtriangleup\bu, \,\,\,\,\,\,\,\,\,
%\partial_t \bB = \grad\btimes (\bu\btimes\bB) + \lambda\bigtriangleup\bB $$
with initial data $\bu_0,\bB_0\in C^{k,\alpha}(\Omega)$ for $k\geq 3$ iff the
momentum
equation (\ref{u-eq})
%$$ \partial_t\bu+(\bu\bdot\grad)\bu= \bJ\btimes \bB-\grad
%%p+\nu\bigtriangleup\bu $$
holds over that interval and simultaneously the stochastic flux conservation
holds
\be \int_S \bB(\bx,t)\bdot d\bS(\bx)=
     {\mathbb E} \left[\int_{\wt{\ba}(S,t)} \bB_0(\ba)\bdot d\bS(\ba) \right],
     \lb{Alfven-thm-Pr} \ee
for all smooth surfaces $S$ and all times $t\in [t_0,t_f],$,  where
$\wt{\ba}(\bx,t)$ are
``back-to-label maps'' for stochastic forward flows $\wt{\bx}(\ba,t)$ solving
the SDE
\be
d\wt{\bx}(\ba,t)=\bu(\wt{\bx}(\ba,t),t)dt+\sqrt{2\lambda}\,d\bW(t),\,\,t>t_0,
\,\,\,\,\,\,\,\,\,\,\,\,\,\,\, \wt{\bx}(\ba,t_0)=\ba. \lb{forward-Pr} \ee
\end{Prop}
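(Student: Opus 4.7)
The plan is to reduce the proposition to the analysis of the $\ol{\bB}$-equation already carried out in the proof of Proposition \ref{Prop1}, exploiting the observation that this equation decoupled from $\ol{\bu}$ and that the driving noise amplitude entered only via the resulting diffusion coefficient. Replacing $\sqrt{2\nu}$ by $\sqrt{2\lambda}$ in the stochastic flow converts the $\nu\bigtriangleup\ol{\bB}$ dissipation into $\lambda\bigtriangleup\ol{\bB}$, which is exactly what is needed to match the induction equation (\ref{B-eq}) at arbitrary magnetic Prandtl number. Since the momentum equation (\ref{u-eq}) is taken as a hypothesis on both sides of the ``iff'', only the induction equation has to be paired with the stochastic Alfv\'en law.

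As a first step, since (\ref{Alfven-thm-Pr}) is required for every smooth surface $S$, Stokes' theorem yields its pointwise equivalent, a stochastic Lundquist formula
\be
\bB(\bx,t)={\mathbb E}\left[\left.\bB_0(\ba)\bdot\grad_a\wt{\bx}(\ba,t)\right|_{\wt{\ba}(\bx,t)}\right],
\ee
precisely as in the passage from (\ref{Alfven-B-thm}) to (\ref{Lundquist}). Writing $\ol{\bB}$ for the right-hand side, and noting that the regularity and existence of $\wt{\bx}$, $\wt{\ba}$, and $\ol{\bB}$ under the hypothesis $\bu,\bB\in C([t_0,t_f],C^{k,\alpha}(\Omega))$ with $k\geq 3$ are established by the same Kunita-type arguments invoked in Proposition \ref{Prop1}, the task becomes showing that $\ol{\bB}=\bB$ is equivalent to $\bB$ satisfying (\ref{B-eq}).

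For the ``only if'' direction, assume both (\ref{u-eq}) and (\ref{B-eq}) hold. Apply the generalized Ito rule to $\wt{\bB}(\bx,t)\equiv\bB_0(\ba)\bdot\grad_a\wt{\bx}(\ba,t)|_{\wt{\ba}(\bx,t)}$ with the SDE (\ref{forward-Pr}); the derivation of (\ref{Btilde-eq}) carries over verbatim with $\nu$ replaced by $\lambda$, giving
\be
d\wt{\bB}=\left[-(\bu\bdot\grad_x)\wt{\bB}+(\wt{\bB}\bdot\grad_x)\bu+\lambda\bigtriangleup\wt{\bB}\right]dt
-\sqrt{2\lambda}(d\bW(t)\bdot\grad_x)\wt{\bB}.
\ee
Taking expectation, $\ol{\bB}$ satisfies $\partial_t\ol{\bB}=-(\bu\bdot\grad)\ol{\bB}+(\ol{\bB}\bdot\grad)\bu+\lambda\bigtriangleup\ol{\bB}$, which for divergence-free $\bu$ and $\ol{\bB}$ is (\ref{B-eq}). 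Since $\bB$ is, by hypothesis, another solution of this linear parabolic equation with initial datum $\bB_0$, the uniqueness argument at the end of the proof of Proposition \ref{Prowp1} (applied to the single field $\ol{\bB}$, using $\|\grad\bu\|_\infty$ in place of $\|\grad\bz\|_\infty$) forces $\ol{\bB}=\bB$, yielding (\ref{Alfven-thm-Pr}). For the ``if'' direction, assuming (\ref{u-eq}) and (\ref{Alfven-thm-Pr}), the Ito computation above is an identity that depends only on the SDE, so $\ol{\bB}$ again solves the same induction-type equation; the hypothesis $\bB=\ol{\bB}$ then transfers this equation to $\bB$, giving (\ref{B-eq}).

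The entire argument is a transcription of parts of the proof of Proposition \ref{Prop1}, so there is no substantively new obstacle. The only conceptual point that must be spotted is that the $\ol{\bB}$-subsystem of (\ref{ubar-eq})--(\ref{Bbar-eq}) is closed in $\ol{\bB}$ (with $\bu$ appearing as a prescribed coefficient), so that one is free to alter the noise intensity of the stochastic flow used to represent $\bB$ independently of anything involving $\nu$. Once this decoupling is noticed, the extension to general Prandtl number follows immediately.
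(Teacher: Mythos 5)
Your proposal is correct and follows essentially the same route as the paper's own proof: both directions reduce to the observation that $\ol{\bB}(\bx,t)={\mathbb E}\left[\left.\bB_0(\ba)\bdot\grad_a\wt{\bx}(\ba,t)\right|_{\wt{\ba}(\bx,t)}\right]$ built from the flow with noise amplitude $\sqrt{2\lambda}$ solves the kinematic dynamo equation with diffusivity $\lambda$, decoupled from $\ol{\bu}$, with uniqueness supplied by the same energy estimate specialized to the single field $\ol{\bB}$. (Only a typographical slip, ``Proposition \ref{Prop1}'' miswritten, needs correcting.)
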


\begin{proof}[Proof of Proposition \ref{Prop2}:]
The argument is nearly the same as that for the previous proposition. The
``if'' direction
is immediate, since the result (\ref{Alfven-thm-Pr}) is equivalent to the
stochastic
Lundquist formula (\ref{Lundquist}) and the equation (\ref{B-eq}) for $\bB$
follows
from (\ref{Lundquist}) using the generalized Ito rule, just as before. For the
``only if''
direction, we define
$$ \ol{\bB}(\bx,t)=
{\mathbb E} \left[\left.\bB_0(\ba)\bdot
\grad_a\wt{\bx}(\ba,t)\right|_{\wt{\ba}(\bx,t)} \right] $$
where $\wt{\bx}(\ba,t)$ is the stochastic flow defined by the SDE
(\ref{forward-Pr}) for the
velocity $\bu$ that satisfies the MHD momentum equation (\ref{u-eq}). It
follows that $\ol{\bB}$
satisfies the kinematic dynamo equation
$$  \partial_t \ol{\bB} =  -(\bu\bdot\grad)\ol{\bB}
        + \ol{\bB}\bdot\grad\bu+\lambda\bigtriangleup\ol{\bB},  \,\,t>t_0,
\,\,\,\,\,\,\,\,\,\,\,\,\,\,\, \ol{\bB}(t_0)=\bB_0 $$
one of whose solutions is $\ol{\bB}=\bB.$ Unicity of this solution again
follows from
an energy inequality
$$  e^{-2\gamma(t_f-t_0)}\|\ol{\bB}(t_f)\|_2^2+
       2\int_{t_0}^{t_f} dt\,  e^{-2\gamma(t-t_0)}
       \left[
\epsilon\|\ol{\bB}(t)\|_2^2+\lambda\|\grad\ol{\bB}(t)\|_2^2\right]
       \leq   \|\bB_0\|_2^2, $$
which is derived by a similar calculation as before \cite{Nunez97}, with
$\gamma>\sup_{t\in [t_0,t_f]}\|\grad\bu(t)\|_\infty + \epsilon.$
\end{proof}

\newpage

\section{Other Incompressible Plasma Fluid Models}\lb{Other}

We now establish similar stochastic conservation laws for some other non-ideal
plasma fluid
models, more refined than standard MHD. Keeping within the stated limitations
of this paper,
we consider here only the versions of these models assuming incompressible
fluid velocities
and isotropic transport coefficients. We also give only sketches of the proofs
of the stated
theorems, emphasizing essential differences from those given previously, since
most
of the details are very similar.

%\subsection{Magnetohydrodynamic $\alpha$-Model}

\subsection{Hall Magnetohydrodynamics}\lb{HMHD}

The equations of incompressible Hall magnetohydrodynamics (HMHD) have the form:
\be \partial_{t} \bu + (\bu\cdot\grad)\bu = -\grad p +
\frac{1}{4\pi\rho}(\grad\btimes\bB)\btimes\bB + \nu\nabla^{2}\bu
\lb{Hall-u-eq} \ee
\be \partial_{t} \bB =
\grad\times\left[\left(\bu-\frac{\alpha}{4\pi\rho}\grad\btimes\bB\right)
\times\bB\right] + \lambda\bigtriangleup\bB
\lb{Hall-B-eq} \ee
%\be \rho \partial_{t} \bu + \rho (\bu\cdot\grad)\bu = -\grad P +
%%\frac{1}{c}\bJ\btimes\bB + \mu\nabla^{2}\bu
%\lb{u-eq} \ee
%\be \partial_{t} \bB = \grad\times(\bu\times\bB) + \eta\nabla^{2}\bB
%\lb{B-eq} \ee
\be \grad\cdot\bB=\grad\cdot\bu=0
\lb{Hall-div-free} \ee
The magnetic induction equation (\ref{Hall-B-eq}) contains a ``Hall drift
term'' proportional
to $\alpha=mc/e,$ whose importance was first emphasized by Lighthill
\cite{Lighthill60}
and which was subsequently extensively investigated; see \cite{Witalis86} and
\cite{Kulsrud05,Bellan06}.
The limit $\alpha\rightarrow 0$ formally recovers standard MHD. Mathematical
properties
of HMHD solutions (existence, regularity, etc.) are studied in
\cite{Nunez04,Nunez05}.

Before stating our new theorems, we must review some standard facts about HMHD
eqs.(\ref{Hall-u-eq})-(\ref{Hall-div-free}), for which, for example, see
\cite{Ruban99}.
If one introduces a vector potential $\bA$ for the magnetic field in Coloumb
gauge,
$\grad\bdot\bA=0,$ then, along with a corresponding scalar potential $\Phi,$ it
satisfies
\be \partial_t \bA =
\left(\bu-\frac{\alpha}{4\pi\rho}\grad\btimes\bB\right)\btimes \bB-\grad\Phi
      + \lambda\bigtriangleup\bA. \lb{Hall-A-eq} \ee
HMHD is a Hamiltonian fluid model with two canonical momenta
\be \bp_i=\bu+\alpha^{-1}\bA,\,\,\,\,\,\,\,\, \bp_e=-\alpha^{-1}\bA,
\lb{Hall-mom} \ee
which are both divergence-free, $\grad\bdot\bp_\sigma=0,\,\,\,\,\sigma=i,e.$
When $\nu=\lambda,$
these satisfy the equations
\be \partial_t \bp_\sigma = \bu_\sigma\btimes(\grad\btimes\bp_\sigma) -\grad
\pi_\sigma
      +\nu \bigtriangleup\bp_\sigma, \,\,\,\,\sigma=i,e \lb{Hall-p-eqs} \ee
with $\pi_i=p+(1/2)|\bu|^2+\alpha^{-1}\Phi, \pi_e=-\alpha^{-1}\Phi,$ and with
$\bu_i$ the ion fluid velocity and  $\bu_e$ the electron fluid velocity, given
by
\be  \bu_i=\bu, \,\,\,\,\,\,\,\,\,
\bu_e=\bu-\frac{\alpha}{4\pi\rho}\grad\btimes\bB. \lb{Hall-vel} \ee
Note that, if $\nu\neq \lambda,$ then the equation for $\bp_i$ would contain an
additional term $(\nu-\lambda)\bigtriangleup\bp_e.$ Corresponding to the two
canonical momenta there are two  generalized vorticities
$\bOmega_\sigma=\grad\btimes\bp_\sigma,$
$\sigma=i,e,$ or concretely
\be \bOmega_i= \bomega + \alpha^{-1}\bB, \,\,\,\,\,\,\,\,
\bOmega_e=-\alpha^{-1}\bB.
\lb{Hall-vort} \ee
When $\nu=\lambda,$ these generalized vorticities satisfy
\be \partial_t \bOmega_\sigma = \grad\times
(\bu_\sigma\times\bOmega_\sigma)+\nu\bigtriangleup\bOmega_\sigma,
\,\,\,\,\sigma=i,e. \lb{Hall-Omega-eqs} \ee
These equations imply two ``frozen-in'' fields for ideal HMHD, one for the ion
fluid
and one for the electron fluid, and two Cauchy-type formulas for the two
generalized
vorticities. There are likewise two Kelvin-type theorems and two Weber formulas
for
the two canonical momenta.

We now state stochastic analogues of these results for non-ideal HMHD, first
for unit Prandtl number:

\begin{Prop321}\lb{Prop321}
Divergence-free fields  $\bu,\bB\in C([t_0,t_f],C^{k,\alpha}(\Omega))$  satisfy
the non-ideal,
incompressible HMHD equations (\ref{Hall-u-eq})-(\ref{Hall-B-eq}) with initial
data
$\bu_0,\bB_0\in C^{k,\alpha}(\Omega),$ with $k\geq 3$ and for unit magnetic
Prandtl
number $\nu/\lambda=1,$  iff for all closed, rectifiable loops $C$ and for all
$t\in [t_0,t_f]$
\be \oint_C \bp_\sigma(\bx,t)\bdot d\bx=
     {\mathbb E} \left[\oint_{\wt{\ba}_\sigma(C,t)} \bp_{\sigma\,0}(\ba)\bdot
d\ba \right],
     \,\,\,\,\,\,\,\,\sigma=i,e \lb{Hall-Kelvin-thms} \ee
Here the canonical momenta $\bp_\sigma,$ $\sigma=i,e$ are given by
eq.(\ref{Hall-mom}) and $\wt{\ba}_\sigma(\bx,t)$
are ``back-to-label maps'' for stochastic forward flows
$\wt{\bx}_\sigma(\ba,t)$  solving,
\be
d\wt{\bx}_\sigma(\ba,t)=
\bu_\sigma(\wt{\bx}_\sigma(\ba,t),t)dt+\sqrt{2\nu}\,d\bW(t),\,\,t>t_0,
\,\,\,\,\,\,\,\,\,\,\,\,\,\,\, \wt{\bx}_\sigma(\ba,t_0)=\ba,
\lb{Hall-stoch-flows} \ee
for $\sigma=i,e,$  with velocities $\bu_\sigma$ given in eq.(\ref{Hall-vel})
and $\bW(t)$
a standard Brownian motion.
\end{Prop321}

\noindent
{\bf Remarks:} {\it (i)} If the noise terms in eq.(\ref{Hall-stoch-flows}) were
chosen to be instead
$\sqrt{2\nu_\sigma}\,d\bW(t),\,\,\,\,\sigma=i,e$ with $\nu_i=\nu$ and
$\nu_e=\lambda,$ then one would
obtain the correct induction equation (\ref{Hall-B-eq}) but the momentum
equation would differ
from (\ref{Hall-u-eq}), containing an additional term
$\alpha^{-1}(\nu-\lambda)\bigtriangleup\bA
=\frac{4\pi e}{mc^2}(\lambda-\nu) \bJ.$

\noindent {\it (ii)} As we shall see below, it would be enough to assume
$\bu\in C([t_0,t_f],
C^{k,\alpha}(\Omega))$ for $k\geq 2,$ whereas $k\geq 3$ is required for $\bB$
because
of the Hall drift term.

\noindent {\it (iii)} The two stochastic Kelvin theorems in
(\ref{Hall-Kelvin-thms}) are equivalent
to {\it stochastic Weber formulas}
\be  \bp_\sigma(\bx,t) =
{\mathbb E }{\mathbb P}\left[ \grad_x\wt{\ba}_\sigma(\bx,t)
\,\bp_{\sigma\,0}(\wt{\ba}_\sigma(\bx,t))\right],\,\,\,\,\sigma=i,e.
\lb{Hall-Weber-frml} \ee
There are likewise two {\it stochastic Cauchy formulas} for the two generalized
vorticities:
\be \bOmega_\sigma(\bx,t)=
{\mathbb E} \left[\left.\bOmega_{\sigma\,0}(\ba)\bdot
\grad_a\wt{\bx}_\sigma(\ba,t)
           \right|_{\wt{\ba}_\sigma(\bx,t)} \right], \,\,\,\,\sigma=i,e.
\lb{Hall-Cauchy-frml} \ee

\begin{proof}[Sketch of proof of Proposition III.1.1]:  The proof is very
similar to those given in
the previous section and in \cite{ConstantinIyer08}. The main step is to derive
equations
for the stochastic time-differential of the variables
\be \wt{\bp}_\sigma={\mathbb P}\left[ \grad_x\wt{\ba}_\sigma
(\bp_{\sigma\,0}\circ\wt{\ba}_\sigma) \right]
                       = \grad_x\wt{\ba}_\sigma
(\bp_{\sigma\,0}\circ\wt{\ba}_\sigma) -\grad_x\wt{\varphi_\sigma}
\lb{ptilde-def} \ee
for $\sigma=i,e,$ of the form
\begin{eqnarray}
 d \wt{\bp}_\sigma(\bx,t) & = & \left[ -(\bu_\sigma\bdot\grad_x)\wt{\bp}_\sigma
-(\grad_x\bu_\sigma)\wt{\bp}_\sigma+\nu\bigtriangleup\wt{\bp}_\sigma\right]dt
              -\sqrt{2\nu}(d\bW(t)\bdot\grad_x)\wt{\bp}_\sigma \cr
       & & \,\,\,\,\,\,\,\,\,\,\,\,\,\,\, -\grad_x\left[d
\wt{\varphi}_\sigma+\left((\bu_\sigma\bdot\grad_x)\wt{\varphi}_\sigma
-\nu\bigtriangleup\wt{\varphi}_\sigma\right)dt
+\sqrt{2\nu}(d\bW(t)\bdot\grad_x)\wt{\varphi}_\sigma \right].
\lb{ptilde-eq} \end{eqnarray}
The calculations are essentially identical to those presented before. There is
just one technical
issue related to regularity of the field $\wt{\bp}_e,$ which should belong to
$C^{2,\alpha}(\Omega)$
in order to give classical meaning to the Laplacian term
$\nu\bigtriangleup\wt{\bp}_e.$ However,
if $\bu,\bB\in C^{3,\alpha}(\Omega)$, then $\bu_i\in
C^{3,\alpha}(\Omega),\bu_e\in C^{2,\alpha}
(\Omega),$ so that $\wt{\bx}_i,\wt{\ba}_i\in
C^{3,\alpha}(\Omega),\wt{\bx}_e,\wt{\ba}_e\in C^{2,\alpha}
(\Omega),$ and thus $\grad_x\wt{\ba}_i\in
C^{2,\alpha}(\Omega),\grad_x\wt{\ba}_e\in C^{1,\alpha}
(\Omega).$ This means that $\wt{\bp}_e$ defined by (\ref{ptilde-def}) belongs a
priori only to
$C^{1,\alpha}(\Omega).$ However, we may use an integration-by-parts identity
$$ {\mathbb P}\left[ (\grad_x \phi)\psi \right]= -{\mathbb P}\left[
\phi(\grad_x \psi)\right],
\,\,\,\,\,\,\,\,\,\,\,\phi,\psi\in C^{1,\alpha}(\Omega), $$
proved as Lemma 3.1 of \cite{Iyer06b}, in order to rewrite the partial
derivative $\partial_k\wt{\bp}_e$
in terms of $\grad_x\wt{\ba}_e$ only and eliminate second derivatives of
$\wt{\ba}_e.$ We can thus
conclude that  $\wt{\bp}_e\in C^{2,\alpha}(\Omega).$

Ensemble-averaging the equations (\ref{ptilde-eq}) for $\sigma=i,e$ yields
\be  \partial_t\ol{\bp}_\sigma =
-(\bu_\sigma\bdot\grad_x)\ol{\bp}_\sigma-(\grad_x\bu_\sigma)\ol{\bp}_\sigma
      -\grad_x\ol{\pi}_\sigma+\nu\bigtriangleup\ol{\bp}_\sigma, \lb{pbar-eq}
\ee
with
$\ol{\pi}_\sigma=\partial_t\ol{\varphi}_\sigma
+(\bu_\sigma\bdot\grad_x)\ol{\varphi}_\sigma
      -\nu\bigtriangleup\ol{\varphi}_\sigma,$ for $\sigma=i,e,$ or, in terms of
$\ol{\bu}$ and $\ol{\bB}$
variables,
\be \partial_t  \ol{\bu} =-(\bu\cdot\grad)\ol{\bu} -(\grad\bu)\ol{\bu}-\grad
\ol{p} +
     \frac{1}{4\pi\rho}(\grad\btimes\bB)\btimes\ol{\bB} +
\nu\bigtriangleup\ol{\bu},
     \lb{Hall-ubar-eq} \ee
\be \partial_t \ol{\bB} =
\grad\times\left[\left(\bu-\frac{\alpha}{4\pi\rho}\grad\btimes\bB\right)
\times\ol{\bB}\right] + \nu\bigtriangleup\ol{\bB}.  \lb{Hall-Bbar-eq} \ee
A fixed point satisfying $(\ol{\bu},\ol{\bB})=(\bu,\bB)$ must therefore also
obey the
HMHD equations (\ref{Hall-u-eq})-(\ref{Hall-B-eq}). The converse statement is
obtained
from the unicity of solutions to the initial-value problem for the above linear
equations
---either (\ref{pbar-eq}), $\sigma=i,e$ or
(\ref{Hall-ubar-eq}),(\ref{Hall-Bbar-eq})---which is
proved using energy estimates as before.
\end{proof}

\noindent Just as in the case of standard MHD, we see that the equation
(\ref{Hall-Bbar-eq})
for $\ol{\bB}$ does not depend upon $\ol{\bu}.$ This makes it possible to
derive a
stochastic conservation law for magnetic-flux at any Prandtl number:

\newpage

\begin{Prop322}\lb{Prop322}
Divergence-free fields  $\bu,\bB\in C([t_0,t_f],C^{k,\alpha}(\Omega))$  satisfy
the
non-ideal, incompressible HMHD equations (\ref{Hall-u-eq})-(\ref{Hall-B-eq})
with initial data $\bu_0,\bB_0\in C^{k,\alpha}(\Omega)$ for $k\geq 4$ iff the
momentum
equation (\ref{Hall-u-eq}) holds over that interval and simultaneously the
stochastic flux
conservation holds
\be \int_S \bB(\bx,t)\bdot d\bS(\bx)=
     {\mathbb E} \left[\int_{\wt{\ba}(S,t)} \bB_0(\ba)\bdot d\bS(\ba) \right],
     \lb{Hall-Alfven-thm} \ee
for all smooth surfaces $S$ and all times $t\in [t_0,t_f],$,  where
$\wt{\ba}(\bx,t)$ are
``back-to-label maps'' for stochastic forward flows $\wt{\bx}(\ba,t)$ solving
the SDE
\be
d\wt{\bx}(\ba,t)=\bu_e(\wt{\bx}(\ba,t),t)dt+\sqrt{2\lambda}\,d\bW(t),\,\,t>t_0,
\,\,\,\,\,\,\,\,\,\,\,\,\,\,\, \wt{\bx}(\ba,t_0)=\ba \lb{Hall-stoch-flow} \ee
with $\bu_e$ the electron fluid velocity given by (\ref{Hall-vel}).
\end{Prop322}

\noindent The proof is as in Proposition \ref{Prop2}, but greater smoothness of
$\bB$
is required to guarantee that $\wt{\bB}\in C^{2,\alpha}.$

\subsection{Two-Fluid Plasma Model}\lb{2F}

The most general hydrodynamical model of a fully ionized plasma consisting of
electrons and
one species of singly-charged ions is the two-fluid model of Braginsky
\cite{Braginsky65},
or the {\it Braginsky equations}. See also \cite{Kulsrud05,Bellan06}. The basic
variables of
the model are the two fluid velocities $\bu_\sigma,\,\,\,\,\sigma=i,e,$ with
$\sigma=i$ for the ion fluid
and $\sigma=e$ for the electron fluid. In the simple form considered here the
equations take the form:
\be (\partial_{t}+ \bu_i\bdot\grad)\bu_i =
+\frac{e}{m_i}\left(\bE+\frac{1}{c}\bu_i\btimes\bB\right)-\grad p_i
+ \nu_i\bigtriangleup \bu_i-\frac{1}{\tau_i}(\bu_i-\bu_e),
\,\,\,\,\,\,\,\,\,\,\,\grad\bdot\bu_i=0 \lb{2F-ui-eq} \ee
\be (\partial_{t}+ \bu_e\bdot\grad)\bu_e =
-\frac{e}{m_e}\left(\bE+\frac{1}{c}\bu_e\btimes\bB\right)-\grad p_e
+ \nu_e\bigtriangleup \bu_e-\frac{1}{\tau_e}(\bu_e-\bu_i),
\,\,\,\,\,\,\,\,\,\,\,\grad\bdot\bu_e=0 \lb{2F-ue-eq} \ee
\be -\bigtriangleup\bA = \frac{4\pi}{c}\bJ =
\frac{4\pi}{c}ne(\bu_i-\bu_e),\,\,\,\,\,\,\,\,\,\,\,\grad\bdot\bA=0
\lb{2F-A-eq} \ee
\be \bE=-\frac{1}{c}\partial_t\bA, \,\,\,\,\,\,\,\,\,\,\,\bB=\grad\btimes\bA
\lb{homo-max} \ee
In addition to internal viscosities $\nu_\sigma,\,\,\sigma=i,e,$ there are
linear drag terms which represent the
exchange of momentum between the two fluids by collisions of the constituent
particles and which
are proportional to the collision frequencies $1/\tau_\sigma,\,\,\sigma=i,e.$
Conservation of momentum requires
$m_e/\tau_e=m_i/\tau_i.$ Note that the vector potential $\bA$ is not an
independent variable, but is
completely determined from $\bu_e,\bu_i$ by means of the elliptic equation
(\ref{2F-A-eq}).
Mathematical properties of solutions of these equations (existence, regularity,
etc.) are
studied in \cite{Nunez08}, including even a separate equation for neutral
molecules.

Neglecting viscosities and drag, the two-fluid model is Hamiltonian with
canonical
momenta:
\be \bp_i = \bu_i + \frac{e}{m_i c}\bA,\,\,\,\,\,\,\,\,\,\,\,\bp_e = \bu_e -
\frac{e}{m_e c}\bA,
\lb{2F-p-def} \ee
satisfying $\grad\bdot\bp_\sigma=0$ for $\sigma=i,e.$ E.g. see \cite{Ruban99}.
Using
$(\bu_\sigma \bdot\grad)\bu_\sigma =\grad(\frac{1}{2}|\bu_\sigma
|^2)-\bu_\sigma \btimes
(\grad\btimes\bu_\sigma ),$ $\pi_\sigma =p_\sigma  +\frac{1}{2}|\bu_\sigma
|^2,$ the
eqs.(\ref{2F-ui-eq}),(\ref{2F-ue-eq}) for $\bu_e,\bu_i$ can be rewritten for
$\bp_e,\bp_i$ as
\be \partial_t\bp_i = \bu_i\btimes(\grad\btimes\bp_i)-\grad \pi_i
+ \nu_i\bigtriangleup \bu_i-\frac{1}{\tau_i}(\bu_i-\bu_e). \lb{2F-pi-eq} \ee
\be \partial_t\bp_e = \bu_e\btimes(\grad\btimes\bp_e)-\grad \pi_e
+ \nu_e\bigtriangleup \bu_e-\frac{1}{\tau_e}(\bu_e-\bu_i), \lb{2F-pe-eq} \ee
Define the magnetic diffusivity
% $\lambda=\frac{m_i c^2}{4\pi ne^2\tau_i}=\frac{m_e c^2}{4\pi ne^2\tau_e},  $
 $\lambda=m_i c^2/4\pi ne^2\tau_i=m_e c^2/4\pi ne^2\tau_e,$
so that
$ \frac{m_i}{\tau_i}(\bu_e-\bu_i)=\frac{m_e}{\tau_e}(\bu_e-\bu_i)=
     \lambda\cdot \frac{e}{c}\bigtriangleup\bA. $
%$ m_i(\bu_e-\bu_i)/\tau_i=m_e(\bu_e-\bu_i)/\tau_e=
 %     \lambda\cdot \frac{e}{c}\bigtriangleup\bA. $
Hence, choosing $\nu_e=\nu_i=\lambda,$ (\ref{2F-pi-eq}),(\ref{2F-pe-eq}) become
\be \partial_t\bp_\sigma  = \bu_\sigma \btimes(\grad\btimes\bp_\sigma )-\grad
\pi_\sigma
+ \lambda\bigtriangleup \bp_\sigma , \,\,\,\,\,\,\,\,\,\,\,\sigma=i,e.
\lb{2F-p-eqs} \ee
The vector potential $\bA$ can be recovered from $\bp_i,\bp_e$ by solving the
Helmholtz equation
\be -\bigtriangleup\bA + \kappa^2\bA=
\frac{4\pi}{c}ne(\bp_i-\bp_e),\,\,\,\,\,\,\,\,\,\,\,\grad\bdot\bA=0
\lb{2F-A-eq-pversion} \ee
with $\kappa^2=4\pi n e^2/\mu c^2$ and $\mu^{-1}=m_i^{-1}+m_e^{-1}$ and then
$\bu_i,\bu_e$ obtained from (\ref{2F-p-def}).  For this non-ideal version of
the two-fluid
model there are two stochastic conservation laws corresponding to the two
canonical momenta:

\begin{Prop33}\lb{Prop33}
Divergence-free fields  $\bu_e,\bu_i\in C([t_0,t_f],C^{k,\alpha}(\Omega))$
satisfy the
non-ideal, incompressible two-fluid equations (\ref{2F-ui-eq})-(\ref{homo-max})
with initial data
$\bu_{e\,0},\bu_{i\,0}\in C^{k,\alpha}(\Omega)$ for $k\geq 2$ and for unit
magnetic Prandtl
numbers $\nu_e/\lambda=\nu_i/\lambda=1,$  iff for all closed, rectifiable loops
$C$ and for all
$t\in [t_0,t_f]$
\be \oint_C \bp_\sigma (\bx,t)\bdot d\bx=
     {\mathbb E} \left[\oint_{\wt{\ba}_\sigma (C,t)} \bp_{\sigma\,0}(\ba)\bdot
d\ba \right],
     \,\,\,\,\,\,\,\,\sigma=i,e \lb{2F-Kelvin-thms} \ee
Here the canonical momenta $\bp_\sigma ,$ $\sigma=i,e$ are given by
eq.(\ref{2F-p-def})
and $\wt{\ba}_\sigma (\bx,t)$ are ``back-to-label maps'' for stochastic forward
flows
$\wt{\bx}_\sigma (\ba,t)$  solving,
\be d\wt{\bx}_\sigma (\ba,t)=\bu_\sigma (\wt{\bx}_\sigma
(\ba,t),t)dt+\sqrt{2\lambda}\,d\bW(t),\,\,t>t_0,
\,\,\,\,\,\,\,\,\,\,\,\,\,\,\, \wt{\bx}_\sigma (\ba,t_0)=\ba,
\lb{2F-stoch-flows} \ee
for $\sigma=i,e,$  with $\bW(t)$ a standard Brownian motion.
\end{Prop33}

\noindent
{\bf Remarks:} {\it (i)} If the noise terms in eq.(\ref{2F-stoch-flows}) were
chosen to be
$\sqrt{2\lambda_\sigma }\,d\bW_\sigma (t),\,\,\,\,\sigma=i,e$ with
$\lambda_e\neq\lambda_i$ then one would
obtain two-fluid model (\ref{2F-ui-eq})-(\ref{homo-max}) with $\nu_\sigma
=\lambda_\sigma $ and
$\tau_\sigma =m_\sigma c^2/4\pi ne^2\lambda_\sigma $  for $\sigma=i,e.$
Although mathematically well-posed,
this system is unphysical since it violates conservation of momentum.

\noindent {\it (ii)} The two stochastic Kelvin theorems (\ref{2F-Kelvin-thms})
are mathematically
equivalent to {\it stochastic Weber formulas}:
\be  \bp_\sigma (\bx,t) =
{\mathbb E }{\mathbb P}\left[ \grad_x\wt{\ba}_\sigma (\bx,t)
\,\bp_{\sigma\,0}(\wt{\ba}_\sigma (\bx,t))\right],\,\,\,\,\sigma=i,e,
\lb{2F-Weber-frml} \ee
identical in form to (\ref{Hall-Weber-frml}) for HMHD.

\begin{proof}[Sketch of proof of Proposition \ref{Prop33}] The proof is very
similar to
that of Proposition III.1.1 and to the proofs in Section 4 of
\cite{ConstantinIyer08}.
Stochastic Weber variables $\wt{\bp}_\sigma ,\,\,\sigma=i,e$ of the same form
as (\ref{ptilde-def}) are
shown to obey stochastic PDE's of the same form as (\ref{ptilde-eq}). It is now
enough
to assume $\bu_e,\bu_i\in C([t_0,t_f],C^{k,\alpha}(\Omega))$ for $k\geq 2,$
because
the integration-parts-identity \cite{Iyer06a,Iyer06b} can be employed to show
that
$\wt{\bp}_e,\wt{\bp}_i\in C^{2,\alpha}(\Omega)),$ just as for $\wt{\bp}_i$ in
the proof
of Proposition III.1.1.

\end{proof}

\noindent The curl of the two canonical momenta in (\ref{2F-p-def}) give two
generalized
vorticities, $\bOmega_\sigma =\grad\btimes\bp_\sigma$ for $\sigma=i,e$:
\be \bOmega_i = \bomega_i  + \frac{e}{m_i c}\bB, \,\,\,\,\,\,\,\,\,\,\,
\bOmega_e =
      \bomega_e  - \frac{e}{m_e c}\bB.   \lb{2F-gen-vort} \ee
If we assume sufficient smoothness ($\bu_e,\bu_i\in
C([t_0,t_f],C^{k,\alpha}(\Omega))$ with
$k\geq 3$), then these satisfy equations
\be  \partial_t \bOmega_\sigma
= \grad\times (\bu_\sigma \times\bOmega_\sigma
)+\lambda\bigtriangleup\bOmega_\sigma ,
\,\,\,\,\sigma=i,e, \lb{2F-Helm-eq} \ee
and there are two stochastic Cauchy formulas
\be  \bOmega_\sigma (\bx,t)=
{\mathbb E} \left[\left.\bOmega_{\sigma\,0}(\ba)\bdot \grad_a\wt{\bx}_\sigma
(\ba,t)
           \right|_{\wt{\ba}_\sigma (\bx,t)} \right], \,\,\,\,\sigma=i,e,
\lb{2F-Cauchy-frml} \ee
of the same form as (\ref{Hall-Cauchy-frml}) for HMHD. Unlike for the previous
models, for the
two-fluid model there is no separate stochastic ``frozen-in'' property for the
magnetic field
$\bB$ alone, at general Prandtl number.

\section{Discussion}\lb{Discuss}

The results of the present paper provide new tools with which to investigate
and explain
resistive phenomena in plasma fluids. The stochastic Lagrangian conservation
laws
derived here are the analogues for non-ideal hydromagnetic systems of
flux-freezing
for ideal ones. Especially robust is the stochastic Alfv\'{e}n theorem and
stochastic
Lundquist formula, which hold in both MHD and HMHD at any magnetic Prandtl
number.
These results have important implications for resistive magnetic reconnection
and
related problems such as magnetic dynamo \cite{Kulsrud05,Bellan06}, which will
be
pursued in detail in future publications. Here we just note that that the
stochastic
Lundquist formula  describes how the resultant magnetic field $\bB(\bx,t)$ at a
spacetime point $(\bx,t)$ is obtained by advecting all magnetic field lines as
``frozen-in'' to the stochastic flows, with added white-noise, and then
averaging
those magnetic field vectors that arrive to the given point. The advection by
the stochastic
flows produces the nonlinear effects of magnetic stretching and tilting, while
the average
over the Brownian motions represents the resistive ``gluing'' of the magnetic
field,
reconnecting the field-lines and changing their topology. Such resistive
effects are
recognized as important  in the dynamo process by the cycle of
``stretch-twist-fold-reconnect''
\cite{Galloway03}. Moffatt has referred to the ``oxymoronic role'' of
resistivity,
writing that ``the dynamo process may be described as a process  of
`regenerative decay',
or perhaps better `reinvigorating dissipation'.'' \cite{Moffatt78}

A possible criticism of the physical relevance of our results is that molecular
resistivity,
represented by a Laplacian term in the induction equation,  is a poor model
of actual dissipative processes in a plasma. In contrast to the iconic
status of the viscosity term in the Navier-Stokes equation for neutral fluids,
there
is considerably less universality in the form of the dissipation in plasmas or,
indeed, in the validity of a hydromagnetic description. The two-fluid equations
of Braginsky \cite{Braginsky65} (see also \cite{Kulsrud05,Bellan06})are more
complicated than those discussed in our section \ref{2F}. For example,
viscosity and
resistivity in the standard Braginsky equations are anisotropic, with
magnitudes
differing along directions longitudinal and transverse to the local magnetic
field.
Furthermore, microscopic Spitzer resistivity is not the only form of magnetic
dissipation
that may occur in plasmas. A wide variety of processses, both collisional and
non-collisional, have been proposed to lead to ``anomalous resistivity'' of
different
forms \cite{Papadopoulos77,Treumann01}. Furthermore, in a partially ionized
plasma
the collisions of ions with neutral molecules induces an ``ambipolar drift''
of magnetic field lines with velocity proportional to the Lorenz force
\cite{Spitzer78,BrandenburgSubramanian05} and this can be the most significant
form
of magnetic dissipation in some cases, e.g. the interstellar medium.  Thus, the
fluid models that we have considered are not necessarily the most physically
realistic.

There are two responses that we can give to this important set of criticisms.

First, the results presented in this paper are far from the most general
possible.
We have chosen to restrict discussion here to models with incompressible fluids
and isotropic transport coefficients, since these hydromagnetic models are
widely employed and the proofs of the main results are simpler for them than
for more complete models.  However, in a following work \cite{EyinkNeto09}, we
establish similar results for much more general plasma fluid models, allowing
for compressible fluids,  anisotropic pressure and transport coefficients,
neutral components, etc. Stochastic conservation laws of the sort demonstrated
here are quite general and should hold for a very large class of non-ideal
plasma fluid models, when the ideal version of the model possesses a
corresponding
``frozen-in'' field. It may even be possible to prove similar stochastic laws
for
kinetic models of plasmas with collisions described by Boltzmann kernels
\cite{FournierMeleard01} or Fokker-Planck operators \cite{Kulsrud05} (Section
8.3),
since the ideal, collisionless Vlasov dynamics possesses analogues of the
frozen-in invariants \cite{Yankov97}.

Second, the precise form of the dissipation in hydromagnetic models may not
matter, as long as its effects are confined to sufficiently small
length-scales.
There is then a large ``effective Reynolds number'' (both magnetic and kinetic)
and the plasma fluid becomes turbulent. We have previously argued
\cite{Eyink07} that the laws of flux conservation and magnetic line-motion
in hydromagnetic turbulence are intrinsically stochastic in the limit of
infinite
Reynolds number. Formally, the random white-noise disappears in the
equations for stochastic Lagrangian particles
\be d\wt{\bx} = \bu^\nu(\wt{\bx}(t),t)dt+\sqrt{2\lambda}\, d\bW(t), \,\,t>t_0,
\,\,\,\,\,\,\,\,\,\,\,\,\,\,\, \wt{\bx}(t_0)=\bx_0 \lb{particle-eq} \ee
as $\nu,\lambda\rightarrow 0$ (cf. also eq.(\ref{forward})). However, the
randomness
need not vanish if the advecting velocity $\bu^\nu$ solving (\ref{u-eq})
approaches a rough
or singular velocity $\bu$ in this limit,  as expected for a Kolmogorov-type
cascade range.
As a consequence of ``explosive'' separation of particles in Richardson
two-particle turbulent
diffusion, a pair of solutions of (\ref{particle-eq}) with the {\it same}
initial condition
$\bx_0$ may separate at time $t$ to a mean-square distance $\sim t^3$ in the
limit
$\nu,\lambda\rightarrow 0.$ These statements have been proved rigorously to
hold
in the Kazantsev-Kraichnan kinematic dynamo model
\cite{Kazantsev68,Kraichnan68}.
It has furthermore been proved that the Lagrangian trajectories in the
Kazantsev-Kraichnan
model remain stochastic as $\nu,\lambda\rightarrow 0,$ a result that has been
termed ``spontaneous stochasticity.'' The limiting probability distributions of
trajectories
are known to be very robust and universal for the case of an incompressible
fluid velocity,
with the same result being obtained for limits of a wide class of
regularizations.  See
\cite{Eyink06,Eyink07} for references and more detailed discussion. The
rigorous
results for the Kazantsev-Kraichnan dynamo model and the new results in the
present
work give further plausibility to the ideas that the precise form of
dissipation
does not matter in nonlinear hydromagnetic turbulence and that flux
conservation
and ``frozen-in'' line-motion will remain as stochastic laws in the limit of
very large
Reynolds numbers.

\vspace{.25in}

 {\small
\noindent {\bf Acknowledgements}. We acknowledge the warm hospitality of the
Isaac
Newton Institute for Mathematical Sciences during the programme on ``The Nature
of
High Reynolds Number Turbulence'', when this paper was completed. This work was
partially supported by NSF grant AST-0428325 at Johns Hopkins University. }

\newpage

\end{document}